\newcommandx*{\LDAUOmicron}[2][1=@pkling_false]{O\!\ifthenelse{\equal{#1}{small}}{\bigl(#2\bigr)}{\left(#2\right)}}
\newcommandx*{\LDAUomicron}[2][1=@pkling_false]{\mathrm{o}\!\ifthenelse{\equal{#1}{small}}{\bigl(#2\bigr)}{\left(#2\right)}}
\newcommandx*{\LDAUOmega}[2][1=@pkling_false]{\Omega\!\ifthenelse{\equal{#1}{small}}{\bigl(#2\bigr)}{\left(#2\right)}}
\newcommandx*{\LDAUomega}[2][1=@pkling_false]{\omega\!\ifthenelse{\equal{#1}{small}}{\bigl(#2\bigr)}{\left(#2\right)}}
\newcommandx*{\LDAUTheta}[2][1=@pkling_false]{\Theta\!\ifthenelse{\equal{#1}{small}}{\bigl(#2\bigr)}{\left(#2\right)}}
\setlist[description]{font=\normalfont\bfseries}
\setlist[enumerate,1]{label=(\alph*)}
\setlist[enumerate,2]{label=(\roman*)}
\begin{document}

\title{Approximation and Heuristic Algorithms for Computing Backbones in Asymmetric Ad-Hoc Networks\thanks{Based on ideas suggested and partially developed in two conference papers \cite{10fromalgo} and \cite{algo}. This work was partially supported by the German Research Foundation (DFG) within the Collaborative Research Center `On-The-Fly Computing' (SFB 901) and the International Graduate School `Dynamic Intelligent Systems'.}}

\author{Faisal N. Abu-Khzam \and Christine Markarian \and Friedhelm~Meyer~auf~der~Heide \and \\ Michael Schubert}

\institute{C.~Markarian and F.~Meyer~auf~der~Heide \at 
              Heinz Nixdorf Institute \& Computer Science Department,
              University of Paderborn,
              33102 Paderborn,
              Germany,
              \email{\{chrissm,fmadh\}@mail.upb.de}
           \and
           F. Abu-Khzam \at
              Department of Computer Science \& Mathematics, Lebanese American University, Beirut, Lebanon 
              \email{faisal.abukhzam@lau.edu.lb}
            \and
            M. Schubert \at 
            Department of Mathematics, 
            University of Paderborn,
            33102 Paderborn,
              Germany,
              \email{mischub@mail.upb.de}            
}

\date{Received: date / Accepted: date}

\maketitle
\begin{abstract}

We consider the problem of dominating set-based virtual backbone used for routing in asymmetric wireless ad-hoc networks. These networks have non-uniform transmission ranges and are modeled using 
the well-established disk graphs. The corresponding graph theoretic problem seeks a strongly connected dominating-absorbent set of minimum cardinality in a digraph. A subset of nodes in a digraph is a strongly connected dominating-absorbent set if the subgraph induced by these nodes is strongly connected and each node in the graph is either in the set or has both an in-neighbor and an out-neighbor in it. 

Distributed algorithms for this problem 
are of practical significance due to the dynamic nature of ad-hoc networks. 
We present a first distributed approximation algorithm, with a constant approximation factor and $O(Diam)$ running time, where $Diam$ is the diameter of the graph. Moreover we present a simple heuristic algorithm and conduct an extensive simulation study showing that our heuristic outperforms previously known approaches for the problem.
\end{abstract} 

\begin{keywords}{ ad-hoc networks, virtual backbone, dominating sets, disk graphs, distributed algorithms 
}\end{keywords}

\section{Introduction}\label{sec:introduction} Ad-hoc networks are local area networks built spontaneously as devices connect. Rather than relying on base stations to coordinate the flow of messages between nodes of the network, individual network nodes forward packets to and from each other without the use of pre–existing network infrastructure. Due to their self-dependent characteristic, ad-hoc networks have been largely deployed in many applications that preclude physical access such as disaster recovery and environmental monitoring. The main reason behind their efficiency has been the reliance on a virtual backbone formed by a subset of nodes in the network and acting as an underlying infrastructure. One of the most significant successes of virtual backbones has been in routing. Virtual backbones efficiently narrow down the search space of a route to the nodes in the backbone such that routing tables are maintained only by those nodes, which significantly reduces message overhead associated with routing updates (\cite{7fromthesis,15fromthesis,16fromthesis,17fromthesis,46fromthesis,48fromthesis}). 

\subsection{Dominating Set-based Virtual Backbones} To allow efficient routing, a virtual backbone is expected to be connected, as small as possible, and one hop away from all nodes of the network. If the devices forming a network use omnidirectional antennas, then the network can be modeled as a disk graph such that nodes represent the devices and disks represent the transmission ranges of nodes. An edge from node $u$ to node $v$ is added if $v$ lies in the disk of $u$. A wireless ad-hoc network is symmetric if all of its nodes have the same transmission range and asymmetric otherwise. A symmetric network is modeled as a Unit Disk Graph (UDG) and the underlying virtual backbone is a connected dominating set ($CDS$). A $CDS$ is a subset of nodes in an undirected graph such that each node in the graph is either in the subset or has a neighbor in it and the subgraph induced by the subset is connected. An asymmetric network is modeled as a Disk Graph (DG) and the underlying virtual backbone is a Strongly Connected Dominating Absorbent Set ($SCDAS$). An $SCDAS$ is a subset of nodes in a directed graph such that each node in the graph is either in the subset or has both an in-neighbor and an out-neighbor in it and the subgraph induced by the subset is strongly connected. 

In practice, nodes in a network often differ in power, control, or functionality and thus do not necessarily have the same transmission range. For example, nodes in power control schemes adjust their transmission power to save energy and reduce collisions. This results in unidirectional links between devices and hence asymmetry in the network. Therefore, in this paper, we will mainly focus on asymmetric wireless ad-hoc networks in which nodes have different transmission ranges.

\subsection{Algorithmic Challenges} Algorithms constructing a virtual backbone are often faced with a number of challenges. The most important ones are due to the dynamic nature of a wireless ad-hoc network. Nodes in such a network constantly change: some leave the network, new ones are added, and others change location. Hence, an algorithm constructing a virtual backbone should be able to adapt to these changes without losing its functionality. To this end, it must avoid any centralized computation where nodes are required to be aware of the entire network. This is especially essential in large networks in which by the time nodes gather information about the entire network, changes might have already been occurred. Thus, an algorithm for a virtual backbone is expected to be \emph{distributed} or \emph{local}. In a local algorithm, each node is able to identify whether it belongs to the backbone or not based on information it gathers only from nodes constant number of hops away from it. Moreover, when a failure occurs in some part of the network, only nodes in the vicinity of the failure get involved and locally fix the failure without affecting the whole network.

In addition to being distributed, an algorithm for a virtual backbone must use as few resources as possible while at the same time producing a good quality solution (i.e., a small $SCDAS$). One of the most well-studied of these resources is \emph{time}. 

In light of the above, we are interested in algorithms that are \emph{distributed}, \emph{fast}, and output a \emph{good quality solution}. The best possible solution would clearly be a smallest virtual backbone (i.e., smallest $SCDAS$). Finding a smallest $SCDAS$ (and $CDS$), however, is NP-hard even in UDGs \cite{1fromalgo}, which explains why heuristic methods have been mainly used in the literature. 

\subsection{Our contribution} We propose two distributed algorithms for the $SCDAS$ problem in Disk Graphs. The first is an approximation algorithm with a constant approximation factor and  an $O(Diam)$ running time, where $Diam$ is the diameter of the input graph. As of the writing of this paper, this is the first distributed algorithm with an approximation guarantee in DGs. When applied to Disk graphs with bidirectional edges (DGBs), our algorithm yields a constant approximation factor with $O(log^*n)$ running time that is optimal following the $\Omega(log^*n)$ lower bound by Lenzen et al. \cite{18fromalgo}. The second is a heuristic that outperforms all existing approaches in terms of $SCDAS$ size for DGs.

\section{Preliminaries}


Throughout this paper, basic graph theoretic notation such as \emph{degree} of a node, \emph{maximum degree} $\Delta$ in a graph, and diameter \emph{Diam} of a graph, are adopted. In a \emph{disk graph} $G = (V, E)$, each node $v \in V$ is fixed on the Euclidean plane and has a transmission range $r_v \in \left[r_{min},r_{max}\right]$, where $r_{min}$ and $r_{max}$ denote the minimum and maximum transmission range, respectively. For two disjoint nodes $u,v \in V$, there is a directed edge $(u,v)$ if and only if $d_{u,v} \leq r_u$ where $d_{u,v}$ is the \emph{Euclidean distance} between $u$ and $v$. An edge $(u,v) \in E$ is \emph{unidirectional} if $(v,u) \notin E$ and \emph{bidirectional} if $(v,u) \in E$. We say $G$ is (strongly) connected if for any two nodes $u,v \in V$, there exists a (directed) path from $u$ to $v$. An \emph{independent set} ($IS$) in $G$ is a subset $S$ of $V$ such that there is no bidirectional edge between any two nodes of $S$: we say $S$ does not violate independence. A subset of $V$ is a \emph{maximal independent set} (MIS) if it is an independent set to which no node can be added without violating independence. Let $(u,v)$ be a unidirectional edge. Then, $u$ is \emph{absorbed} by $v$ and $v$ is \emph{dominated} by $u$. A \emph{dominating set} is a subset of $V$ that dominates every node $v \in V$ and an \emph{absorbent set} is a subset of $V$ that absorbs every node $v \in V$.  An $r$-neighborhood of a node $v$ is the set of nodes in the graph that are within $r$ hops of $v$ (not including $v$ itself). We say a graph is \emph{growth-bounded} if there is a polynomial function $f(r)$ such that every $r$-neighborhood in the graph contains at most $f(r)$ independent nodes. 

In this paper, we adopt the model in which communication among nodes of a graph is done in synchronous rounds such that in each round, each node sends a message of size $O(\log n)$ bits to its neighbors. 

Initial work on dominating set-based virtual backbones focused on undirected general graphs and UDGs. There has been a lot of centralized \cite{24fromthesis,37fromthesis,6fromthesis} as well as distributed \cite{16fromthesis,17fromthesis,1fromthesis,2fromthesis,14fromthesis,22fromthesis,25fromthesis,50fromthesis,15fromthesis,49fromthesis,28fromthesis} approaches to construct a $CDS$ in the literature. The results were then extended to DGBs, subgraphs of DGs consisting of only bidirectional edges \cite{19fromalgo,21fromalgo,22fromalgo}.

DGs were first studied by Wu \cite{9frompaper}, who introduced the $SCDAS$ problem and gave a simple distributed algorithm with no approximation guarantee. In \cite{13fromalgo}, Clark et al. gave the first approximation algorithm, having a constant approximation factor, and two other heuristics outperforming the heuristic in \cite{9frompaper}. The two heuristics are (i) Dominating Absorbent Spanning Trees (DAST) and (ii) Greedy Strongly Connected Component Merging Algorithm (G-CMA). DAST constructs two spanning trees and outputs the union of the two trees as an $SCDAS$. G-CMA first finds a dominating absorbent set and then uses additional nodes to make the set strongly connected, using shortest paths between strongly connected components. A distributed heuristic for DGs by Kassaei et al. \cite{25fromalgo} was later shown to outperform the heuristics in \cite{13fromalgo} in terms of the $SCDAS$ size through simulations. 

As for general directed graphs, the only work done is by Li et al. \cite{14fromalgo} who proposed a centralized algorithm for $SCDAS$ with a logarithmic approximation factor, which is the best possible for general graphs \cite{4fromthesis} unless $P = NP$.

\section{An Approximation Algorithm}

In this section, we propose an approximation algorithm for $SCDAS$ in Disk Graphs and prove its constant approximation factor in $O(Diam)$ rounds. We also show that our algorithm results in constant approximation for $CDS$ in DGBs in $O(log^*n)$ rounds.

\subsection{The Algorithm}

Given a strongly connected directed graph $G = (V, E)$, our algorithm deletes all unidirectional edges from $G$ and then outputs an MIS $I$ (Figure \ref{fig:Alg1(i)}). Note that since $G$ is left with only bidirectional edges, we may apply any distributed algorithm for computing an MIS in undirected graphs, e.g. Luby's algorithm in \cite{Luby}. We will later present a faster algorithm to compute an MIS in DGs. Clearly, $I$ forms a dominating-absorbent set DAS in $G = (V, E)$. To strongly connect $I$ in $G$, the algorithm constructs $G' = (I, E')$ from $G$ such that a directed edge from $u$ to $v$ is added if there is a path from $u$ to $v$ of length at most three whose (at most two) inner nodes are in $V \setminus I$. This construction may lead to multiple edges from $u$ to $v$. In such a case, we remove all but one of them (without loss of generality). The nodes in $I$ along with the inner nodes in $G$ corresponding to the remaining edges of $G'$ form an $SCDAS$ (Figure \ref{fig:Alg1(ii)}).

\begin{figure} 
\begin{center}
    \includegraphics[scale=0.4]{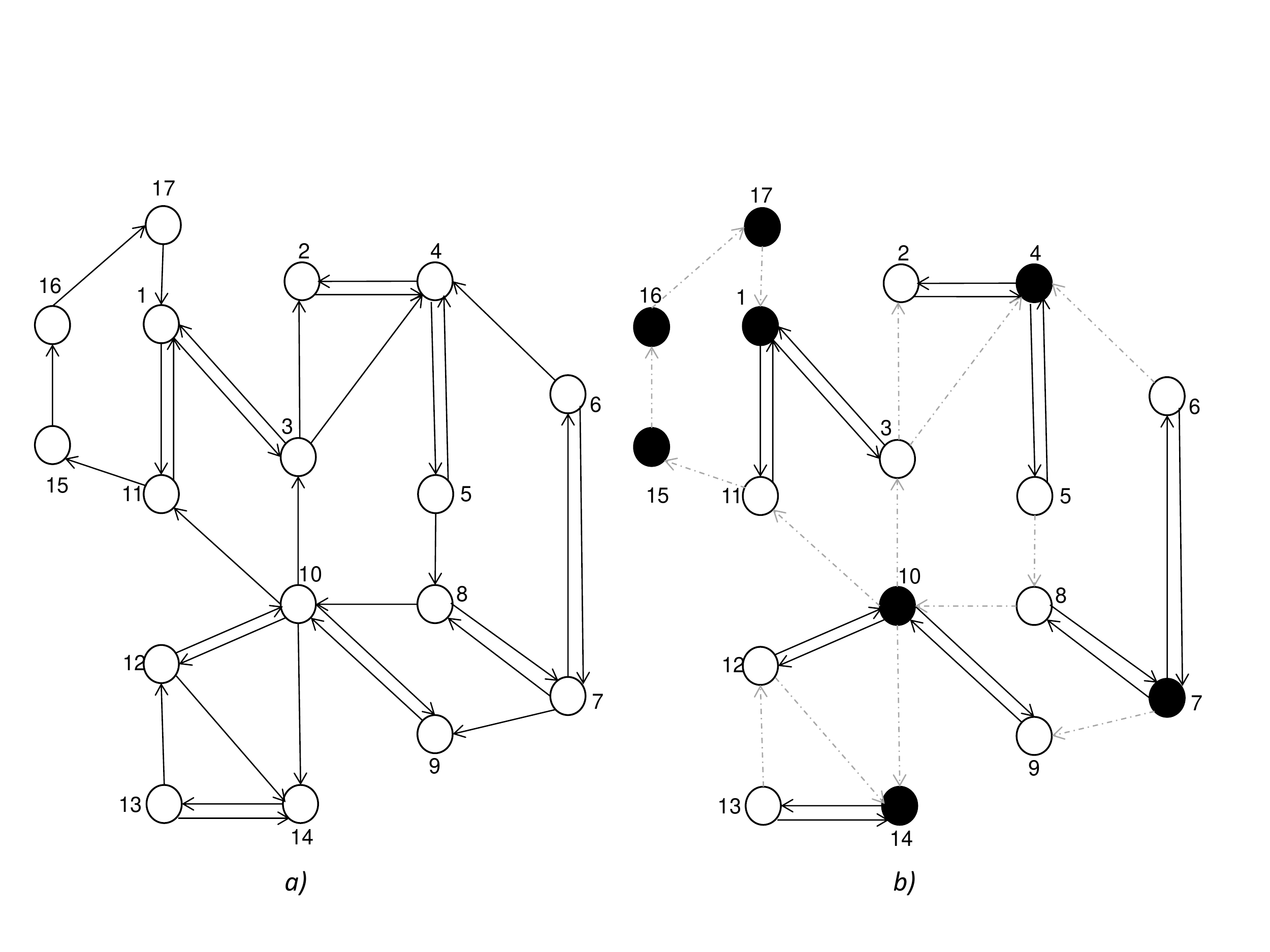}
		\caption{a) The original graph $G = (V, E)$, b) The dotted edges are the unidirectional edges that were removed from $G$ and the black nodes form the MIS nodes.}
		\label{fig:Alg1(i)}
\end{center}
\end{figure}
\begin{figure} 
 \begin{center}
    \includegraphics[scale=0.4]{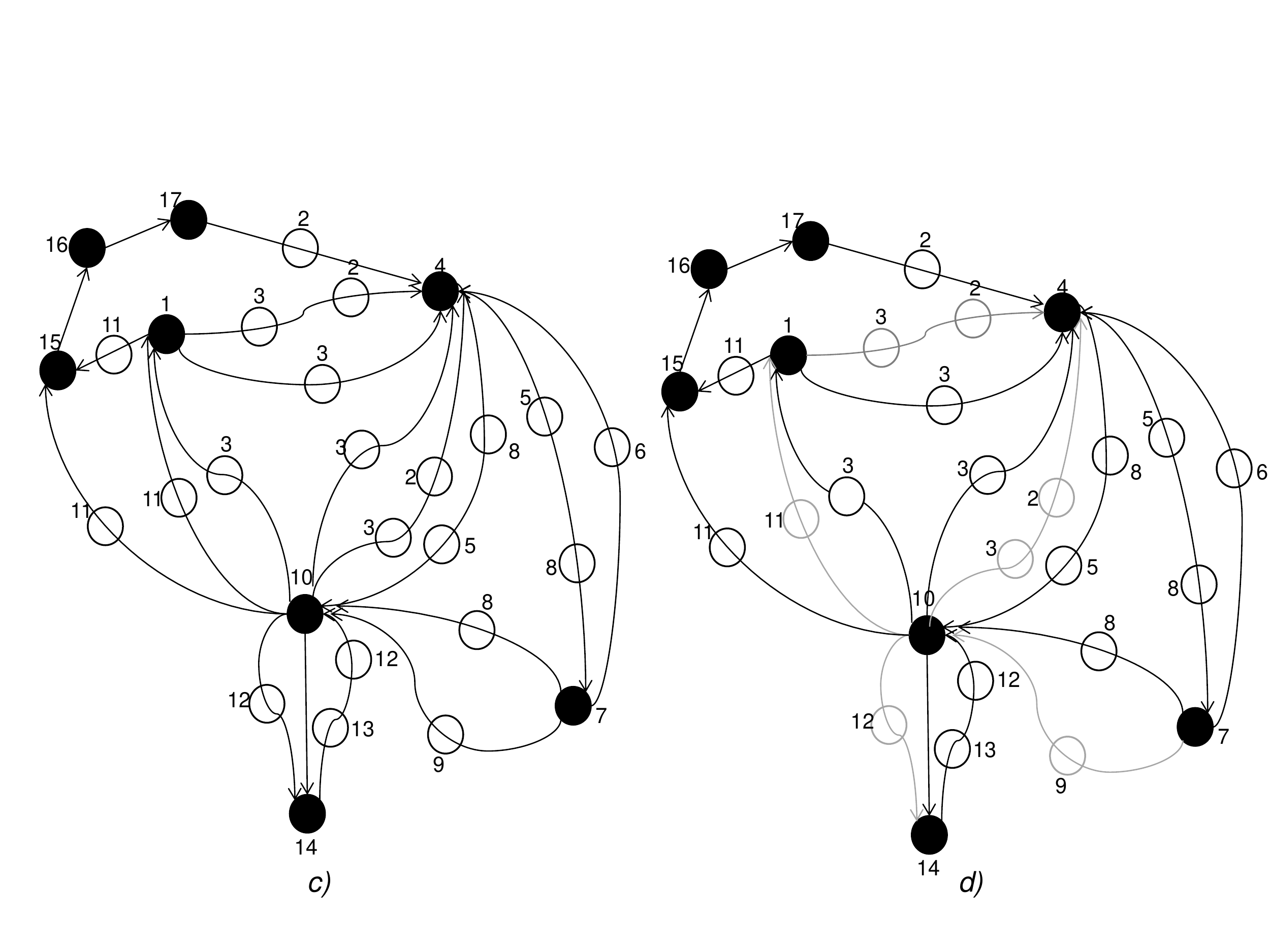}
		\caption{c) $G' =  (I,E')$ where the node set $I$ is in black. d) $G'=(I, E')$, the gray edges are those that were removed in Step 2.2, and the nodes on the black edges form the set $C$.}
		\label{fig:Alg1(ii)}
	\end{center}
\end{figure}

\begin{algorithm} 
  \caption{(Approximation Algorithm)}
  \begin{algorithmic}
	\State {\bf Input:} A strongly connected directed graph $G = (V,E)$
  \State {\bf Output:} An $SCDAS$ $S \subset V$
	\State Step 1: Delete all unidirectional edges from $G$ and then find an MIS $I$. 
  \State Step 2: Construct $G' = (I,E')$ from $G = (V,E)$ as follows. 
	\State \hspace{18pt} 2.1: For each pair $u$, $v$  of nodes from $I$, $E'$ contains a directed edge from $u$ to $v$ if there is a path from $u$ to $v$ of length at most three whose (at most two) inner nodes are in $V \setminus I$. This construction may lead to multiple edges from $u$ to $v$. 
  \State \hspace{18pt} 2.2: For each  pair  ($u, v$)  of nodes from $I$ such that $G'$ contains multiple edges from $u$ to $v$, remove all but one of these edges. 
	\State Output $I\cup C$, where $C$ is the set of the inner nodes in $G$ corresponding to the remaining edges of $G'$.
	\end{algorithmic}
	\label{alg:Approx}
\end{algorithm}
\paragraph{\bf Distributed Implementation:} The algorithm above is implemented as follows. All nodes are assigned IDs. (Step 1) In each round, each node runs a distributed MIS algorithm and decides whether it belongs to an MIS $I$ or not. (2.1) Each node in $I$ sends an \emph{edgeRequest} packet, a packet requesting to form an edge. The edgeRequest packet includes a \emph{source} that contains the ID of the node requesting to form an edge. If a non-MIS node in $V \setminus I$ receives such a request, it adds its ID to the packet and forwards the packet. Note that a node in $V \setminus I$ ignores all future edgeRequest packets from a source $u$ if it has previously received a packet from $u$. (2.2) A node in $I$ may receive multiple edgeRequests with the same source. For each source, it ignores all but the first edgeRequest. Thus with each source, it forms an edge consisting of the nodes with the IDs on the packet. Each node in $I$ informs the source nodes about its edge formations with them that in turn inform the participating nodes, i.e., nodes on the selected packets (edges). 

\subsection{Analysis}

In what follows, we show that the algorithm above is correct and yields a constant approximation factor in $O(Diam)$ rounds. 

\subsubsection{Correctness.} Given a strongly connected directed graph $G = (V, E)$, Algorithm \ref{alg:Approx} outputs an $SCDAS$.
\begin{proof}
The algorithm outputs $I\cup C$, where $I$ is a maximal independent set and therefore a dominating absorbent set. Thus, it remains to show that $I\cup C$ is strongly connected. In what follows, we show that if $G$ is strongly connected, then $G'$ is strongly connected. Let $G$ be a strongly connected graph and $I$ an MIS in $G$. Then, the following two properties hold.
\begin{enumerate}
	\item[i)] If there is a path $P = (u,v,w)$ with $u,v,w \in V \setminus I$ then $v$ must be dominated and absorbed by a node $y \in I$. This follows directly from the definition of a maximal independent set. 
	\item[ii)] For each pair of nodes $(u,w)$ there exists a walk $W = (u = v_1,...v_s = w)$ ($v_i$ not necessarily distinct) such that each node in $I \cap W$ is followed by at most two nodes of $V \setminus I$. Assume for contradiction that this does not hold. Let $A$ be the set including all walks $W$ from $u$ to $w$ in $G$. For a walk $W \in A$ let $m_W$ be the maximum length of a subpath given by consecutive nodes of $V \setminus I$ and $c(m_W)$ be the number of those subpaths with length $m_W$ in $W$. We choose a walk $W' = (u = v_1,...v_s = w) \in A$ such that $m_{W'}$ and $c(m_{W'})$ is minimum. Let $v_i,v_{i+1},v_{i+2}$ ($i=1..s-2$) be three nodes of a longest subpath with nodes in $V \setminus I$. Due to i, $v_{i+1}$ is dominated and absorbed by a node $y \in I$. Thus, $W = (v_1,...,v_i,v_{i+1},y,v_{i+1},v_{i+2},...v_s)$ is a walk from $u$ to $w$. If $c(m_{W'}) = 1$ then $m_{W} = m_{W'}-1$ and if $c(m_{W'}) > 1$ then $c(m_W) = c(m_{W'})-1$, contradicting the choice of $W'$.
\end{enumerate}
Let $I$ be the underlying MIS of $G$ and $u,w \in I$. It follows from ii) that we can find a walk $W$ such that each node in $I \cap W$ is followed by at most two nodes of $V \setminus I$. Thus, $W$ corresponds to a path $P$ in $G'$ from $u$ to $w$.
This implies that $G'$ constructed in Step 2.1 is strongly connected. Since Step 2.2 only removes edges from $E'$ whose deletion does not affect the strong connectivity of $G'$, the resulting  $I\cup C$ is strongly connected in $G$. \hfill $\square$
\end{proof}

\subsubsection{Approximation Ratio \& Running Time.} The theorem below shows that Algorithm \ref{alg:Approx} yields a constant approximation factor in $O(Diam)$ rounds.
 
\begin{theorem} 
Given a DG $G = (V, E)$ with transmission ratio $k = \frac{r_{max}}{r_{min}}$. Algorithm \ref{alg:Approx} gives $O(k^4)$-approximation factor for $SCDAS$ in $O(Diam)$ rounds.
\end{theorem}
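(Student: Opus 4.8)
The plan is to bound the two ingredients of the output $I \cup C$ separately against the size of an optimal $SCDAS$, which I will denote \OPT, using a single packing estimate tailored to the transmission ratio $k = r_{max}/r_{min}$. The geometric fact I would isolate first is that any two nodes $x,y$ joined by a \emph{bidirectional} edge satisfy $d_{x,y} \le \min(r_x,r_y)$, so two \emph{independent} nodes (no bidirectional edge between them) satisfy $d_{x,y} > \min(r_x,r_y) \ge r_{min}$. Hence the nodes of the MIS $I$ are pairwise at Euclidean distance strictly greater than $r_{min}$. Combined with a standard area argument — disjoint disks of radius $r_{min}/2$ centered at such points fit into a disk of radius $R + r_{min}/2$ — this shows that any disk of radius $R$ contains at most $(2R/r_{min}+1)^2$ nodes of $I$. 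I will apply this bound twice, once with $R = r_{max}$ and once with $R = 3r_{max}$, each time yielding an $O(k^2)$ estimate.

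First I would bound $|I|$. Each $u \in I$ is either in \OPT or, since \OPT dominates $V$, has an in-neighbor $p \in \OPT$; in the latter case $d_{p,u} \le r_p \le r_{max}$. Charging each $u \in I \cap \OPT$ to itself and each remaining $u$ to a dominating node $p$, every $p \in \OPT$ receives at most one ``self'' charge plus the $I$-nodes lying in the disk of radius $r_{max}$ around it, of which there are $O(k^2)$ by the packing bound. Summing over \OPT gives $|I| = O(k^2)\,|\OPT|$.

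Next I would bound $|C|$. Each retained edge of $G'$ accounts for at most two inner nodes, so $|C| \le 2\,|E'|$, where $E'$ is the edge set surviving Step 2.2. A $G'$-edge from $u$ to $v$ arises from a directed path of length at most three in $G$, and since every directed edge $(x,y)$ satisfies $d_{x,y} \le r_x \le r_{max}$, the endpoint $v$ lies within distance $3r_{max}$ of $u$. After Step 2.2 there is at most one edge per ordered pair, so the out-degree of each $u \in I$ in $G'$ is at most the number of $I$-nodes in the disk of radius $3r_{max}$ about $u$, which the packing bound caps at $O(k^2)$. Thus $|E'| = O(k^2)\,|I|$ and, combining with the previous paragraph, $|C| \le 2\,|E'| = O(k^2)\,|I| = O(k^4)\,|\OPT|$. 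Adding the two estimates yields $|I \cup C| = O(k^2)\,|\OPT| + O(k^4)\,|\OPT| = O(k^4)\,|\OPT|$, the claimed factor.

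For the running time I would argue that Steps 2.1 and 2.2 contribute only $O(1)$ rounds, since every edgeRequest packet traverses a path of length at most three before a decision is made, so the round complexity is governed entirely by the MIS computation of Step 1; plugging in the MIS subroutine then gives the stated $O(Diam)$ bound (and $O(\log^* n)$ in the DGB special case, via the faster MIS algorithm discussed afterwards). The main obstacle I anticipate is not any single estimate but getting the geometry exactly right: the independence constraint is phrased in terms of \emph{bidirectional} edges, so the correct separation between MIS nodes is $\min(r_u,r_v)$ rather than $r_u$, and in the $|C|$ bound the reach $3r_{max}$ must be derived from \emph{directed} edges, whose length is controlled by the tail's range, not from undirected adjacency. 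Once the packing lemma is stated so as to absorb these subtleties, both factors of $k^2$ fall out cleanly.
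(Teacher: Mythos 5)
Your argument is correct and reaches the theorem by the same overall decomposition as the paper --- the approximation factor is obtained as a product of two $O(k^2)$ quantities, one controlling the size of the MIS $I$ against the optimum and one controlling the number of connector nodes per MIS node via the out-degree of $G'$ after Step 2.2. The second ingredient is essentially identical in both proofs: independent nodes are pairwise more than $r_{min}$ apart, a $G'$-neighbor lies within $3r_{max}$, and a disk-packing count gives $O(k^2)$ (the paper's explicit constant is $\left\lfloor 49k^2-1\right\rfloor$, yours is $(6k+1)^2$; both are fine). Where you genuinely diverge is the first ingredient: the paper simply cites the bound $|IS|\leq 2.4(k+\frac{1}{2})^2\cdot|SCDAS_{opt}|+3.7(k+\frac{1}{2})^2$ from Park et al., whereas you prove $|I|=O(k^2)\cdot|SCDAS_{opt}|$ from scratch by charging each $u\in I$ either to itself or to a dominator $p$ in the optimal solution with $d_{p,u}\leq r_p\leq r_{max}$, and then packing. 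This charging argument is sound (it correctly uses that independence is defined via bidirectional edges, so MIS nodes are separated by more than $\min(r_u,r_v)\geq r_{min}$) and makes the proof self-contained, at the cost of losing the slightly sharper additive-plus-multiplicative form of the cited lemma --- which is immaterial for an $O(k^4)$ statement.

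One minor inaccuracy in your running-time paragraph: you attribute the $O(Diam)$ bound to the MIS computation, but Luby's algorithm runs in $O(\log n)$ rounds and the growth-bounded MIS algorithm in $O(k^8\log^* n)$ rounds, neither of which is bounded by the diameter in general (consider a dense graph of diameter one). In the paper the $Diam$ term comes from Step 2: after the three broadcasts that build $E'$, the source nodes must inform the selected inner nodes of their membership in $C$, and this dissemination is what costs $O(Diam)$ rounds. Your claim that Steps 2.1 and 2.2 take $O(1)$ rounds covers only the edge-formation broadcasts, not this final notification phase, so you should either add that phase or restate which step dominates the round complexity.
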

\begin{proof} We show that at the end of Step 2.2, the degree of each node in $G'$ is upper bounded by $\left\lfloor 49k^2-1 \right\rfloor$. To prove the latter, observe that at the end of Step 2.2, each node in $G'$ has at most one out-going edge for each neighbor. Therefore, the degree of each node in $G'$ is bounded by the number of independent nodes in $u$'s 3-hop neighborhood in $G$. Since the distance between any two independent nodes $u,v$ in $G'$ is greater than $r_{min}$ (otherwise $u$ and $v$ must be connected in both directions), $d_{u,v}$ is thus bounded by $r_{min}< d_{u,v} < 3r_{max}$. The maximum area that may be covered by the disks of the independent nodes in $u$'s 3-hop neighborhood in $G$ is given by the difference of the areas between two disks with radii $3.5r_{max}$ and $\frac{r_{min}}{2}$, respectively. Furthermore, the minimum area of a disk is $\pi ( \frac{r_{min}}{2})^2$. It follows that $u$ has at most $\left\lfloor \frac{(3,5r_{max})^2-(\frac{r_{min}}{2})^2 }{(\frac{r_{min}}{2})^2} \right\rfloor = \left\lfloor 49k^2-1 \right\rfloor$ neighbors in $G'$.

Moreover, we have that the size of any independent set in G is upper bounded by $2.4(k+\frac{1}{2})^2 \cdot \left|SCDAS_{opt}\right| + 3.7(k+\frac{1}{2})^2$, where $SCDAS_{opt}$ denotes an optimal SCDAS (proved in \cite{Cent}).

Now we can conclude the bound on the approximation factor as follows. The size of the $SCDAS$ constructed by the algorithm is bounded by
\begin{equation}
\left|SCDAS\right|\leq \left|IS\right|( 1 + 2\Delta(G')),
\end{equation}
because Step 2 of the algorithm adds for each node in $I$ at most $2\Delta(G')$ nodes from $V \setminus I$ in order to strongly connect $I$. Plugging in the bounds from lemmas 3 and 4 yields:
\begin{equation}
\left|SCDAS\right|\leq\left(2.4(k+\frac{1}{2})^2\cdot\left|SCDAS_{opt}\right|+3.7(k+\frac{1}{2})^2\right)\cdot\left(1+2 \left\lfloor49k^2-1\right\rfloor\right)
\end{equation}
This implies:
\begin{equation}
\left|SCDAS\right|= O(k^4) \cdot \left|SCDAS_{opt}\right| 
\end{equation}

It remains to show that Algorithm \ref{alg:Approx} takes $O(Diam)$ rounds. To compute an MIS in Step 1, we use Luby's $O(\log n)$ time randomized algorithm to construct an MIS for directed general graphs. As for DGs, there is a deterministic $O(\log^*n)$ time algorithm to find an MIS in bounded-growth graphs \cite{2008}. If the  transmission ratio  $k = r_{max}/r_{min}$ is bounded, DGs become bounded-growth graphs with $f(r) = O(r^2 k^2)$. Therefore, for DGs, rather than using Luby's $O(\log n)$ time randomized algorithm, we use the deterministic $O(\log^*n)$ time algorithm for bounded-growth graphs in \cite{2008} which when applied to DGs takes $O(k^8 \log^*n)$ time. In Step 2, constructing $G' = (I,E')$ from $G = (V,E)$ needs only three broadcasts because the edgeRequest packets stop after at most two inner nodes. Once `connecting' nodes are selected, it remains to inform them. Informing the participating nodes takes $O(Diam)$ rounds where each source node informs other source nodes about the selected edges and consequently the selected nodes. Note that the number of nodes to be informed are at most $2k^2$, thus bounding the number of propagated messages by $O(k^2)$.Therefore, the total running time is bounded by $O(Diam)$.  \hfill $\square$
\end{proof}
\subsection{Discussion} Here we show that a slight modification of Algorithm \ref{alg:Approx} yields a constant approximation for CDS in DGBs in $O(\log^*n)$ rounds. 

Given a connected undirected graph $G = (V, E)$, the algorithm constructs an MIS $I$ in $G$ that clearly forms a dominating set in $G$. To connect $I$ in $G$, each node in $I$ ignores all edgeRequest packets from all source nodes except the first one it receives. Since the graph is undirected, it is enough for each node in $I$ to connect to one other node in $I$. 

\begin{theorem} 
Given a DGB $G = (V, E)$ with transmission ratio $k = \frac{r_{max}}{r_{min}}$. 
Algorithm \ref{alg:Approx} gives $O(\ln k)$-approximation factor for $CDS$ in $O(\log^*n)$
rounds.
\end{theorem}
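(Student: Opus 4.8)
The plan is to separate the argument into a \emph{connecting bound} and a \emph{domination bound}, exactly as in the proof of the preceding theorem, and then to replace the $O(k^2)$ packing estimate used there by a sharper $O(\ln k)$ estimate that exploits the bidirectionality of the edges in a DGB. Writing $CDS_{opt}$ for an optimal connected dominating set and $I$ for the MIS produced in Step~1, the two inequalities I aim for are $\left|CDS\right| < 3\left|I\right|$ and $\left|I\right| = O(\ln k)\left|CDS_{opt}\right|$; multiplying them yields the claimed ratio.

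For the connecting bound I would argue as in the correctness proof: since $G$ is connected and $I$ is an MIS, the graph $G'$ built in Step~2.1 is connected, so it admits a spanning tree on the $\left|I\right|$ MIS nodes. Each edge of this tree corresponds to a path of length at most three in $G$ and therefore contributes at most two inner nodes from $V\setminus I$. A spanning tree has $\left|I\right|-1$ edges, so the connecting set $C$ satisfies $\left|C\right| \le 2(\left|I\right|-1)$ and hence $\left|CDS\right| = \left|I\right| + \left|C\right| < 3\left|I\right|$. Because the graph is undirected, the distributed rule in which every node of $I$ keeps only the first edgeRequest it receives realizes such a structure, so the factor $\Delta(G')$ that appears in the $SCDAS$ analysis is \emph{not} incurred here.

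The hard part is the domination bound $\left|I\right| = O(\ln k)\left|CDS_{opt}\right|$, and this is where bidirectionality is essential. The key structural claim I would establish is that every node $d$ of a DGB has at most $O(\ln k)$ pairwise independent neighbors. The point is that a bidirectional edge between $u$ and $d$ forces $d_{u,d} \le \min(r_u,r_d)$, so in particular $r_u \ge d_{u,d}$: a neighbor far from $d$ must itself have a large range. I would cover the disk of radius $r_d \le r_{max}$ around $d$ by the central disk of radius $r_{min}$ together with the $O(\log k)$ geometric annuli $A_j = \{\,u : 2^{j-1}r_{min} \le d_{u,d} < 2^{j}r_{min}\,\}$, for $j = 1,\dots,\lceil \log_2 k\rceil$. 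Inside $A_j$ every neighbor has range at least $2^{j-1}r_{min}$, so any two independent neighbors lie at distance strictly greater than $2^{j-1}r_{min}$ while both sit within radius $2^{j}r_{min}$ of $d$; a standard area/packing argument then bounds their number by $O(1)$, and the same bound holds for the central disk (there independence already forces pairwise distance $> r_{min}$). Summing the $O(1)$ contributions over the $O(\log k)$ regions gives the $O(\ln k)$ bound. To finish, I would charge every node of $I$ to one node of $CDS_{opt}$ that dominates it (such a node exists because $CDS_{opt}$ is a dominating set); each node of $CDS_{opt}$ then receives at most $O(\ln k)$ charges from pairwise independent neighbors, whence $\left|I\right| = O(\ln k)\left|CDS_{opt}\right|$.

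For the running time I would reuse the observation that a DGB with bounded transmission ratio is growth-bounded with $f(r)=O(r^2k^2)$, so the MIS of Step~1 can be computed by the deterministic algorithm of \cite{2008} in $O(\log^*n)$ rounds. The connecting step is now purely local: the edgeRequest packets travel at most three hops and each node of $I$ commits to its first request, so no global informing phase (the source of the $O(Diam)$ term in the $SCDAS$ analysis) is required and Step~2 costs $O(1)$ rounds. Hence the total running time is $O(\log^*n)$. The only genuinely new ingredient is the annulus estimate; everything else follows the template of the preceding proof.
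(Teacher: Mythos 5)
Your proof is correct and follows the same decomposition as the paper's: bound $\left|CDS\right|$ by $3\left|I\right|$ since each MIS node contributes at most two connectors, bound $\left|I\right|$ by $O(\ln k)\cdot\left|CDS_{opt}\right|$, and obtain $O(\log^*n)$ rounds from the growth-bounded MIS algorithm of \cite{2008} plus an $O(1)$-round connecting phase. The only difference is that the paper simply cites \cite{Wang} for the $O(\ln k)$ relation between an independent set and an optimal CDS in a DGB (stating it somewhat elliptically as ``the size of any independent set is at most $O(\ln k)$''), whereas you reprove it from scratch with the annulus/charging argument --- which is essentially the proof appearing in that reference, so nothing genuinely new is needed.
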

\begin{proof}
The size of any independent set in G is at most $O(\ln k)$ (proved in \cite{Wang}). Hence, since each node in $I$ adds at most two nodes from $V \setminus I$ to connect, the approximation factor holds. 

To compute an MIS in $G$, we use the same deterministic $O(\log^*n)$ time algorithm to find an MIS in bounded-growth graphs \cite{2008} which when applied to DGBs, takes $O(k^8 \log^*n)$ time. Moreover, since the graph is undirected, $O(1)$ time is needed to connect the constructed MIS, thus completing the proof. \hfill $\square$
\end{proof}

\section{A Simple Heuristic Approach}

In this section, we propose a heuristic for $SCDAS$ in DGs and show through extensive simulations that it outperforms all existing approaches for DGs in terms of $SCDAS$ size.

\subsection{The Algorithm}

We first give the intuitive ideas behind our algorithm and then describe it formally. 

A high-degree node is more likely to dominate and absorb other nodes than nodes of lower degree. Thus, a `good' solution will most likely contain many high-degree nodes. However, since the solution set must also be strongly connected, some low-degree nodes might also be used as intermediary nodes to connect the nodes in an optimal solution set. Instead of discovering the low-degree nodes in that are needed only to guarantee strong connectivity, our algorithm starts by deleting a low-degree node that does not contribute to the strong connectivity property. This can easily be done by removing a low-degree node from the graph and checking if the graph remains strongly connected. If so, the just-removed node can be deleted (i.e., not counted in the solution). When such a node is deleted, it has to be dominated and absorbed by a subset of the remaining nodes. Therefore, high-degree in- and out-neighbors are selected (i.e., added to the solution) to dominate and absorb the deleted node. Once a node is selected, it can never be deleted. 

More formally, we color the nodes of the graph in green, red, or white as follows. Initially, all nodes are white and form an $SCDAS$. A green node is one that is decided to be in the solution, while a red one is a deleted node (decided not to be in). A white node is not yet decided. At any subsequent stage, the white and green nodes induce an $SCDAS$.
The algorithm ends when the set of white nodes becomes empty. In addition to the above procedure, when the number of non-red in-neighbors (or out-neighbors) of a white node $v$ drops to one, the only one in-neighbor (or out-neighbor) of $v$ is automatically placed in the solution (if not already green), being the only node that can absorb (or, respectively, dominate) $v$.

\begin{algorithm} 
 \caption{(LDHD: Low-Degree Elimination and High-Degree Selection)}
 \begin{algorithmic}
	\State {\bf Input:} A strongly connected directed graph $G = (V,E)$
  \State {\bf Output:} An $SCDAS$ $S \subset V$
	\State Initially all nodes are in $S$ and colored white. While $S$ has white nodes, 
	\State Select a white node $v$ of minimum degree.
  \State \hspace{18pt} If $G[S \backslash v]$ is not strongly connected
  \State \hspace{18pt} Color $v$ green
  \State Else
  \State \hspace{18pt} Color $v$ red and update the degrees of its neighbors 
  \State \hspace{18pt} If none of $v$'s in-neighbors is green
  \State \hspace{36pt} Select in-neighbor $u$ of maximum degree 
  \State \hspace{36pt} Color $u$ green
  \State \hspace{18pt} If none of $v$'s out-neighbors is green
  \State \hspace{36pt} Select out-neighbor $w$ of maximum degree
  \State \hspace{36pt} Color $w$ green
	\State Update the degrees

\end{algorithmic}
	\label{alg:LDHD}
\end{algorithm}

\paragraph{\bf Distributed Implementation:} The algorithm LDHD above is implemented as follows. All nodes are assigned IDs. After all nodes acquire information about the degrees and IDs of their in- and out- neighbors, a node $v$ with minimum degree among its neighbors initiates the algorithm (ties are broken by smaller ID first). To check if removing $v$ disconnects $G$, it is enough to check if the subgraph induced by the neighbors (in- and out- neighbors) of $v$ is disconnected in which case $v$ goes into the solution. Otherwise, the in- and out- neighbors with highest degree (ties are again broken by smaller ID first) go into the solution to dominate and absorb $v$, respectively (if not already dominated and/or absorbed). Information about the degrees are broadcasted after being updated. 

\paragraph{\bf Running Time:} LDHD requires $O(Diam)$ rounds every time nodes gather information about their neighborhood. Each node waits for lower degree nodes at most $O(\Delta) = O(n)$ rounds until it is its turn. Therefore, the overall running time of the algorithm is at most $O(n \cdot Diam)$. 

\subsection{Experimental Analysis}

In this section, we present the results of our extensive simulations conducted to evaluate the performance of LDHD. 

LDHD is compared to the exact solution produced by a brute force optimization algorithm as well as to two well known approaches in the literature: Dominating Absorbent Spanning Trees (DAST) and Greedy Strongly Connected Component Merging algorithm (G-CMA) proposed by \cite{13fromalgo}. The quality measure is the size of the $SCDAS$ constructed. 

To generate a random asymmetric network, we follow the same approach adopted in previously published work (e.g., in \cite{13fromalgo}). $N$ nodes with distinct identity numbers between 1 and $N$ are located (randomly) in a limited square area of the Euclidean plane. Each node chooses a random transmission range that is bounded by some specified minimum and maximum values. A directed edge is added from a node $u$ to node $v$ if the Euclidean distance between $u$ and $v$ is less than the transmission range of $u$. If the generated graph is strongly connected, we use it as a test instance, otherwise we discard it. We measure the performance of each approach under the effect of two network parameters:

\begin{enumerate}
\item Network Density, which we vary in two ways:
\begin{enumerate}
\item Different numbers of nodes in a fixed area
\item Different area sizes for a fixed number of nodes
\end{enumerate}
\item Transmission ratio: $k = T_{rmax}/$ $T_{rmin}$, where 
$T_{rmax}$ and $T_{rmin}$ are the maximum and minimum transmission ranges respectively.
\end{enumerate}

\noindent
Simulation for each performance measure is repeated 100 times for each instance and the average result is taken.

\subsubsection{Network Density: Different Number of Nodes}

To compare the size of the solution constructed by each of the four algorithms, we deploy $N$ nodes in a 1000m x 1000m  area. $N$ changes between 10 and 130 with an increment of 10. The nodes select their transmission ranges from the interval [$T_{rmin}$ , $T_{rmax}$ ]. Figure \ref{fig1} shows the performance of each approach.

\subsubsection{Network Density: Different area size}

To study the effect of varying the area on the performance of each approach, we deploy a fixed number of nodes, $N=50$. The nodes select their transmission ranges from the interval 
[$T_{rmin}$ , $T_{rmax}$]. The area varies from 600m x 600m to 1400m x 1400m. Figure \ref{fig2} shows the performance of each approach.

\subsubsection{Transmission ratio}
We also study the effect of varying the transmission ratio $k = T_{rmax}/$ $T_{rmin}$ on the size of the solutions constructed by each of the four approaches. We conducted two experiments. In the first, we randomly locate 50 nodes in a fixed 1000m x 1000m area and let $k$ varies as follows: We fix $T_{rmax}$ = 1000m and vary $T_{rmin}$ between 200m and 1000m with an increment of 200 for 
$k = 1$ to $5$. In the second experiment, we measure the performances on a larger network and randomly locate 100 nodes in a fixed 1200m x 1200m area and vary $k$ as follows: we fix $T_{rmax}$ = 1200m and vary $T_{rmin}$ between 200m and 1200m with an increment of 200 for $k=1$ to $6$. Figure \ref{fig3} and Figure \ref{fig4} show the performance of each approach in each of the experiments, respectively. 

\begin{figure} 
\begin{center}
\includegraphics[scale= 0.35]{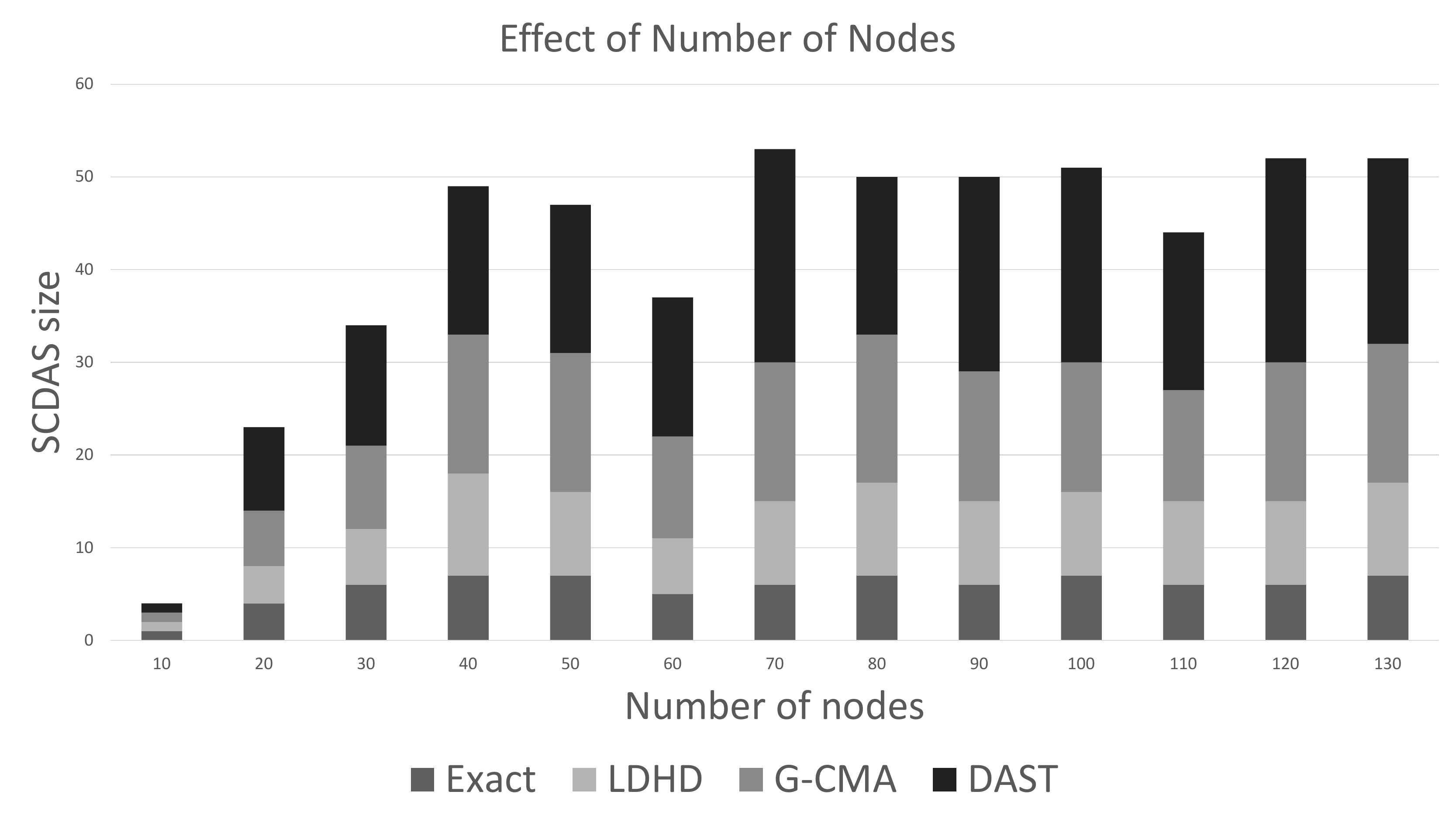}
\caption{Network Density: Different number of nodes}
\label{fig1}
\end{center}
\end{figure}

\begin{figure} 
\begin{center}
\includegraphics[scale= 0.35]{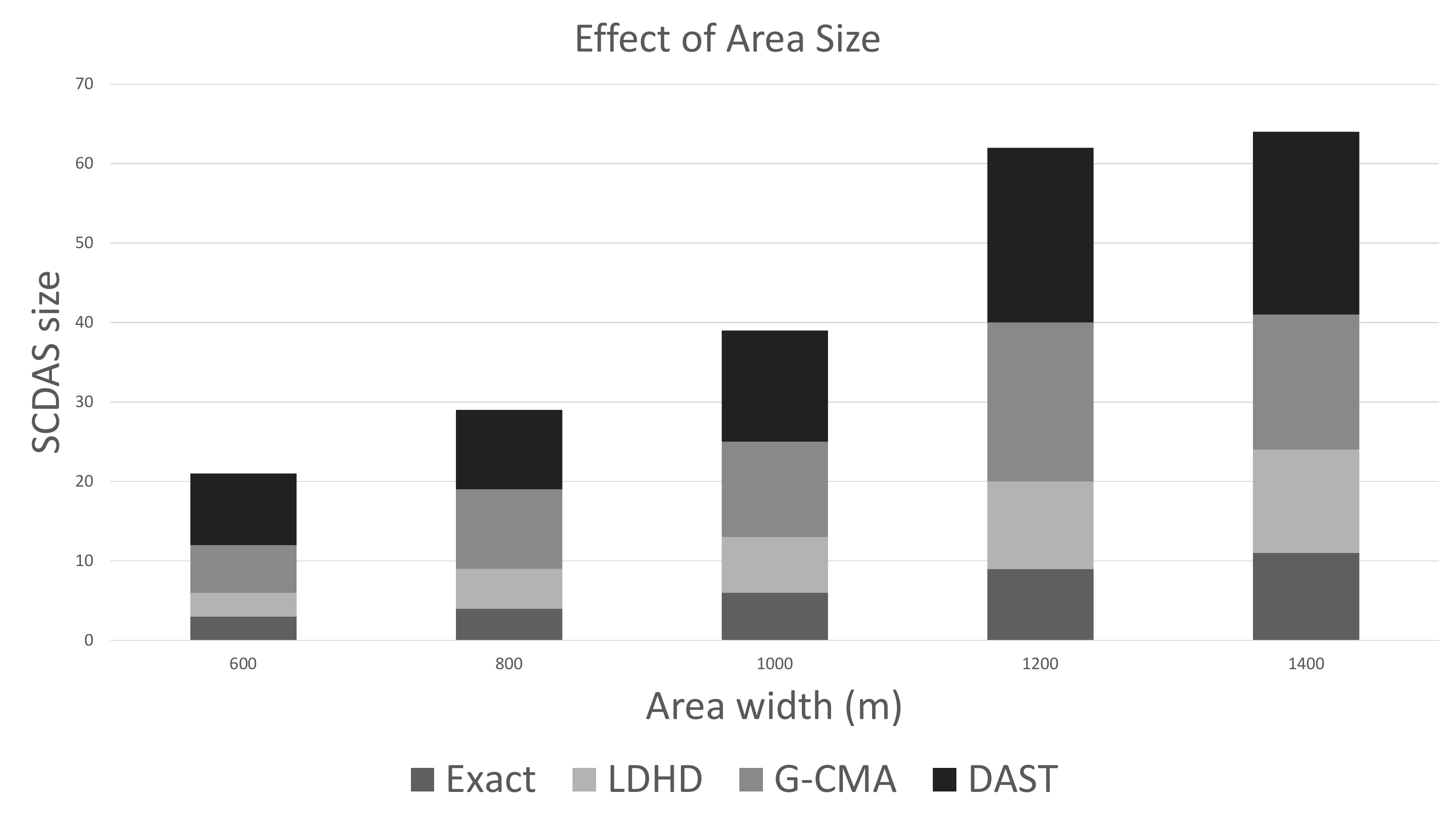}
\caption{Network Density: Different area size}
\label{fig2}
\end{center}
\end{figure}

\begin{figure} 
\begin{center}
\includegraphics[scale= 0.35]{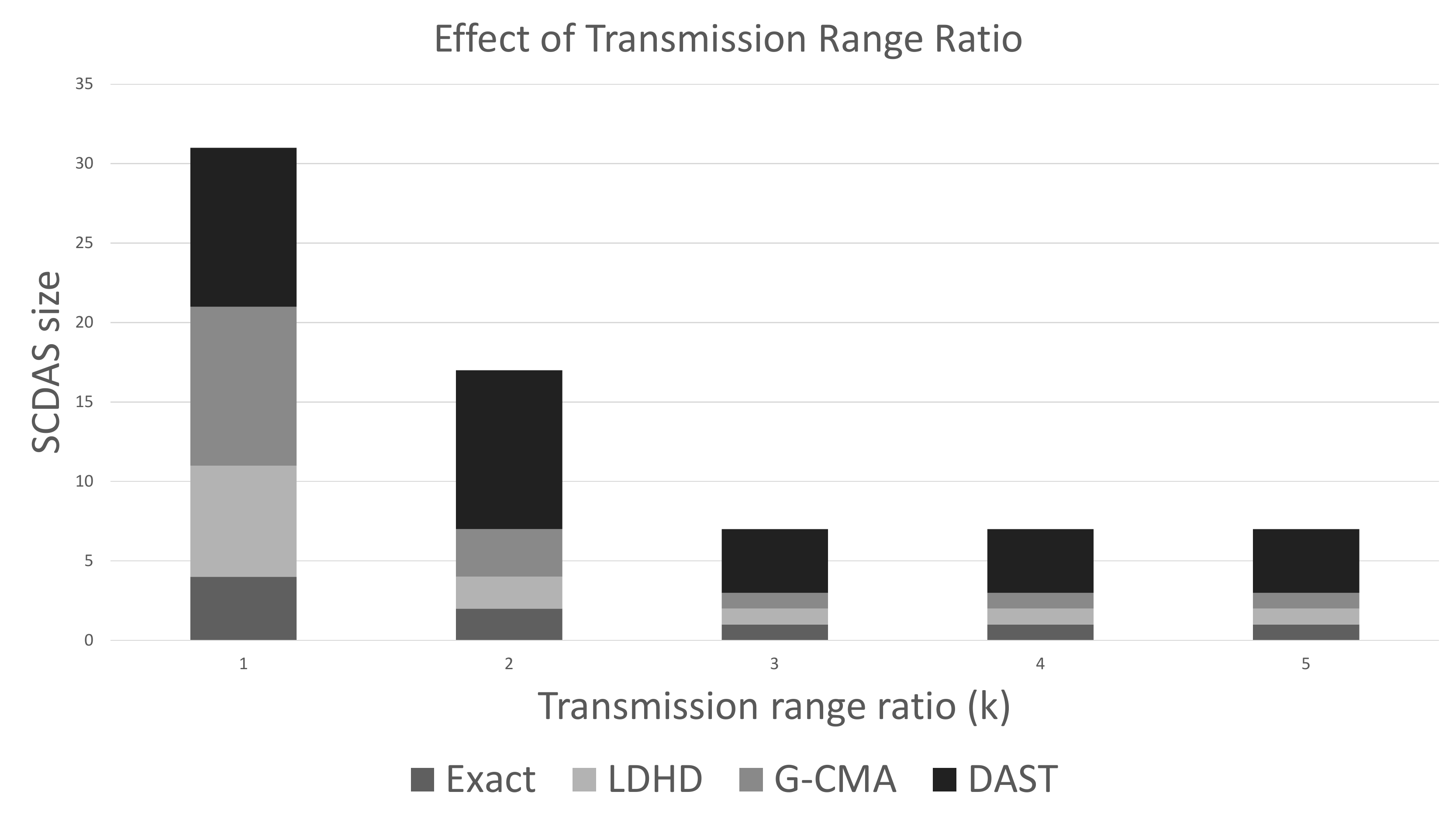}
\caption{Different Transmission Ratios; N=50}
\label{fig3}
\end{center}
\end{figure}

\begin{figure} 
\begin{center}
\includegraphics[scale= 0.35]{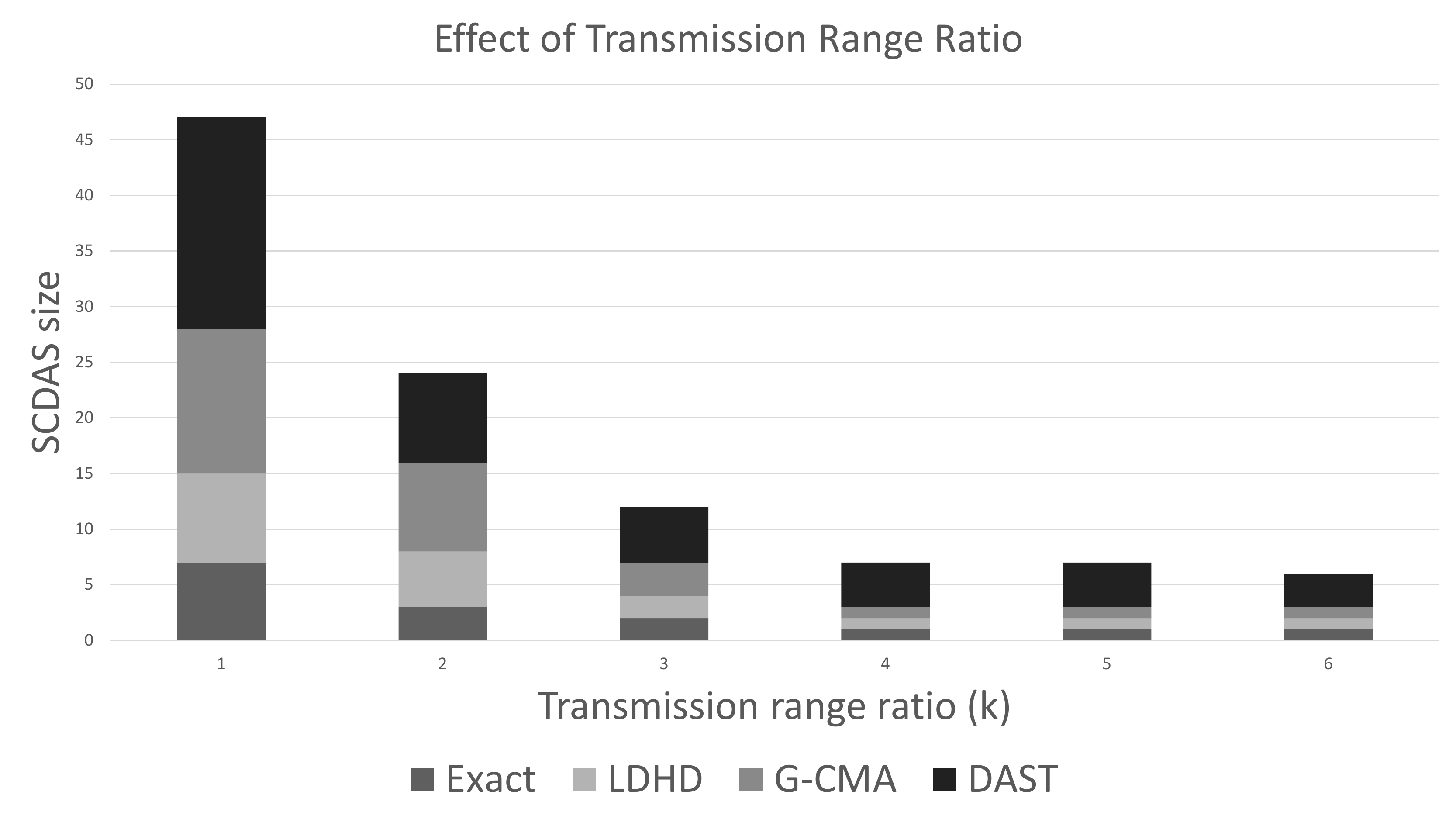}
\caption{Different Transmission Ratios; N=100}
\label{fig4}
\end{center}
\end{figure}

\subsection{Discussion}

As shown in the four charts, there is a notable difference in the size of the $SCDAS$ constructed by each of the four approaches. Obviously, LDHD is the closest to optimum in all the different experimental setups. In fact, the size of our computed solution was never larger than 1.75 times the optimum, while the sets constructed by the two other heuristics were in the range: 2.5 to 5 times the optimum.

The main strength of LDHD is its efficiency, simplicity as well as the quality of delivered solutions: despite being very simple, LDHD can deliver solutions whose size is much closer to optimum than those produced by other algorithms found in the literature. In fact, one cannot neglect the need for highly efficient simple algorithms considering the limited computational resources, especially in wireless networks. 

Another remarkable feature of LDHD is that it maintains a feasible solution at any point during the search for a best-possible solution. Therefore, a solution can be delivered whenever needed, should time be the most critical measure. Note that the algorithm proposed by \cite{25fromalgo} can be characterized with the same feature of maintaining a feasible solution at any point but involves more computation compared to LDHD. Our algorithm considers the role played by both low- and high- degree nodes, whereas the algorithm in \cite{25fromalgo} considers only low-degree nodes $v$ and includes two more tests in addition to \emph{strong connectivity test} (whether removing $v$ disconnects the graph): \emph{domination test} (whether all nodes dominated by $v$ can be dominated by another node) and \emph{absorbency test} (whether all nodes absorbed by $v$ can be absorbed by another node). A low-degree node $v$ is then added to the solution if it passes the three tests. An important observation here is that if $v$ does not disconnect the remaining graph (i.e., passes the strong connectivity test), this means all of $v$'s neighbors can reach and be reached by all other nodes and must therefore be absorbed and dominated by other nodes (i.e., this means $v$ automatically passes the two other tests too and there is no point of taking them). 

\section{Concluding Remarks}
 
This paper provides distributed algorithms for the problem of dominating set-based virtual backbone used for routing in asymmetric wireless ad-hoc networks. The techniques and ideas developed in this paper can be of independent interest and can be used in other extensions of the problem, thus capturing more challenges faced by these networks. For example, nodes in the virtual backbone are often subject to failure, thus leaving fault tolerance unavoidable. Symmetric networks have been extensively studied with the fault tolerant consideration, whereas only few heuristics have studied the problem in asymmetric networks \cite{conjecture,simulation}. Hence, it will be interesting to give approximation guarantees and better heuristics for the underlying problem in asymmetric networks.  

Another interesting direction is to explore the virtual backbone problem under the effect of unknown future. Previous work in the literature has mainly addressed the problem with the assumption that the entire network is known in advance. It will be interesting to model the problem in an online setting in which nodes are revealed with time and an online algorithm has to construct an efficient virtual backbone at each point of time without knowing future nodes. Within this context, there has been some work on the equivalent problem of dominating set in general graphs: the online set cover problem by Alon et al.\cite{onlinesetcover}. To the best of our knowledge, no one has considered the online dominating set problem in Disk Graphs. Nevertheless, similar problems such as finding large independent sets in Disk Graphs have been studied in an online setting \cite{onlineIndependentSet} and might give some insights to solve the online dominating set problem in Disk Graphs.



{} 

\begin{thebibliography}{1}


\bibitem{10fromalgo} F. Abu-Khzam and C. Markarian. {\em A Degree-Based Heuristic for Strongly Connected Dominating-Absorbent Sets in Wireless Ad-Hoc Networks. Innovations in Information Technology}, pp. 200-204, 2012.

\bibitem{algo} C. Markarian, F. Meyer auf der Heide, and M. Schubert. {\em A Distributed Approximation Algorithm for Strongly Connected Dominating-Absorbent Sets in Asymmetric Wireless Ad-Hoc Networks. ALGOSENSORS}, pp. 217-227, 2013.


\bibitem{7fromthesis} M. Cardei, X. Cheng, X. Cheng, and D. zhu Du. {\em Connected Domination in Multihop Ad-hoc Wireless Networks. International Conference on Computer Science and Informatics}, pp. 251-255, 2002.

\bibitem{46fromthesis} J. Wu. {\em Extended Dominating-set-based Routing in Ad-hoc Wireless Networks with Unidirectional Links. IEEE Transactions on Parallel and Distributed Systems}, 13 (9) pp. 866-881, 2004.

\bibitem{48fromthesis} J. Wu, F. Dai, M. Gao, and I. Stojmenovic. {\em On Calculating Power-aware Connected Dominating Sets for Efficient Routing in Ad-hoc Wireless Networks. IEEE Journal of Communications and Networks}, 4 (1) pp. 59-70, 2002. 

\bibitem{24fromthesis} S. Guha and S. Khuller. {\em Approximation Algorithms for Connected Dominating Sets. Algoritmica}, 20 (4) pp. 374-387, 1998.

\bibitem{37fromthesis} M. Min, H. Du, X. Jia, C. X. Huang, S. C. H. Huang, and W. Wu. {\em Improving Construction for Connected Dominating Set with Steiner Tree in Wireless Sensor Networks. Journal of Global Optimization}, 35 (1) pp. 111-119, 2006.

\bibitem{6fromthesis} S. Butenko, X. Cheng, C. A. Obviera, and P. Pardalos. {\em A New Heuristic for the Minimum Connected Dominating Set Problem on Ad-hoc Wireless Networks. Recent Developments in Cooperative Control and Optimization}, pp. 61-73, 2004.

\bibitem{4fromthesis} N. Alon, D. Moshkovitz, and S. Safra. S. {\em Algorithmic Construction of Sets for k-restrictions. ACM Transactions on Algorithms} 2 (2) pp. 153-177, 2006.

\bibitem{16fromthesis} B. Das and V. Bharghavan. {\em Routing in Ad-hoc Networks Using Minimum Connected Dominating Sets. International Conference on Communications}, pp 376-380, 1997.

\bibitem{17fromthesis} B. Das, R. Sivakumar, and V. Bharghavan. {\em Routing in Ad-hoc Networks Using a Spine. International Conference on Computers and Communication Networks}, pp 34-39, 1997.

 \bibitem{1fromthesis} K. Alzoubi, R-J. Wan, and O. Frieder. {\em New Distributed Algorithm for Connected Dominating Set in Wireless Ad-hoc Networks. International Conference on System Sciences}, pp. 3849-3855, 2002.

\bibitem{2fromthesis} K. M. Alzoubi, R-J. Wan, and O. Frieder. {\em Message-optimal Connected Dominating Sets in Mobile Ad-hoc Networks. ACM International Symposium on Mobile Ad-hoc Networking and Computing}, pp. 157-164, 2002.

\bibitem{14fromthesis} J. Czyzowicz, S. Dobrev, T. Fevens, H. González-Aguilar, E. Kranakis, J. Opatrny, and J. Urrutia. {\em Local Algorithms for Dominating and Connected Dominating Sets of Unit Disk Graphs with Location Aware Nodes. Latin American Symposium on Theoretical Informatics}, 4957 pp. 158-169, 2008.

\bibitem{22fromthesis}  S. Funke, A. Kesselman, U. Meyer, and M. Segal. {\em A Simple Improved Distributed Algorithm for Minimum CDS in Unit Disk Graphs. ACM Transactions on Sensor Networks}, 2 (3) pp. 444-453, 2006.
\bibitem{25fromthesis} B. Han. {\em  Zone-based Virtual Backbone Formation in Wireless Ad-hoc Networks. Ad-hoc Networks}, 7 (l) pp. 183-200, 2009.


\bibitem{50fromthesis} J. Wu and H. Li. {\em On Calculating Connected Dominating Set for Efficient Routing in Ad-hoc Wireless Networks. International Workshop on Discrete Algorithms and Methods for Mobile Computing and Communications}, pp. 7-14, 1999.
\bibitem{15fromthesis} F. Dai and J. Wu. {\em An Extended Localized Algorithm for Connected Dominating Set Formation in Ad-hoc Wireless Networks. IEEE Transactions on Parallel and Distributed Systems}, 15 (10) pp. 908-920, 2004.

\bibitem{49fromthesis} J. Wu, M. Gao, and I. Stojmenovic. {\em On Calculating Power-aware Connected Dominating Sets for Efficient Routing in Ad-hoc Wireless Networks. International Conference on Parallel Processing}, pp. 346-356, 2001.

\bibitem{28fromthesis} H. Kassaei, M. Mehrandish, L. Narayanan, and J. Opatrny. {\em A New Local Algorithm for Backbone Formation in Ad-hoc Networks. ACM Symposium on Performance Evaluation of Wireless Ad-hoc, Sensor, and Ubiquitous Networks}, pp. 49-57, 2009.

	\bibitem{19fromalgo} M. Thai, F. Wang, D. Liu, S. Zhu, and D. Du. {\em Connected Dominating Sets in Wireless Networks with Different Transmission Ranges. Mobile Computing}, 6 (7), 2007.
	
	
	\bibitem{21fromalgo} H. Raei, M. Fathi, A. Akhhlaghi, and B. Ahmadipoor. {\em A New Distributed Algorithm for Virtual Backbone in Wireless Sensor Networks with Different Transmission Ranges. Computer Systems and Applications}, pp. 983-988, 2009.
	
	\bibitem{22fromalgo} H. Raei, M. Sarram, B. Salimi, and F. Adibnya. {\em Energy-Aware Distributed Algorithm for Virtual Backbone in Wireless Sensor Networks. Innovations in Information Technology}, pp. 435-439, 2008. 
	
	  \bibitem{9frompaper} J. Wu. {\em An Extended Dominating-set-based Routing in Ad hoc Wireless Networks with Unidirectional links. IEEE Transactions on Parallel and Distributed Systems}, 13 (9) pp. 866-881, 2002.
		
	\bibitem{1fromalgo} B. Clark, C. Colbourn, and D. Johnson. {\em Unit Disk Graphs. Discrete Mathematics}, 86 pp. 165-177, 1990. 


  \bibitem{13fromalgo} M. A. Park, J. Willson, C. Wang, M. Thai, W. Wu, and A. Farago. {\em A Dominating and Absorbent Set in a Wireless Adhoc Network with Different Transmission Ranges. Mobile Ad hoc Networking and Computing}, pp. 22-31, 2007.
	
	  \bibitem{14fromalgo} D. Li, H. Duc, and P. Wan. {\em Construction of Strongly Connected Dominating Sets in Asymmetric Multihop Wireless Networks.  Theoretical Computer Science}, 410 pp. 661-669, 2009.

	\bibitem{18fromalgo} C. Lenzen and R. Wattenhofer. {\em Leveraging Linial’s locality limit. Distributed Computing}, 5218 pp. 394–407, 2008.

	\bibitem{Luby} M. Luby. {\em A Simple Parallel Algorithm for the Maximal Independent Set Problem. SIAM Journal on Computing}, 15 (4) pp. 1036-1053, 1986.
	
	\bibitem{Cent} M. A. Park, J. Willson, C. Wang, M. Thai, W. Wu, and A. Farago. {\em A Dominating and Absorbent Set in a Wireless Adhoc Network with Different Transmission Ranges. Mobile Ad hoc Networking and Computing}, pp. 22-31, 2007.
	
		\bibitem{Wang} M. Thai, F. Wang, D. Liu, S. Zhu, and D. Du. {\em Connected Dominating Sets in Wireless Networks with Different Transmission Ranges. Mobile Computing}, 6 (7), pp. 721-730, 2007.
		
  \bibitem{2008} J. Schneider, R. Wattenhofer. {\em A log-star Distributed Maximal Independent Set Algorithm for Growth-Bounded Graphs. Principles of Distributed Computing}, pp. 35-44, 2008.

\bibitem{25fromalgo} H. Kassaei and L. Narayanan. {\em A New Algorithm for Backbone Formation in Ad hoc Wireless Networks of Nodes with Different Transmission Ranges. Wireless and Mobile Computing}, pp. 83-90, 2010.


	\bibitem{16fromalgo} R. Tiwari, T. Mishra, and Y. Li. {\em  k-strongly Connected Dominating and Absorbing Set in Wireless Ad hoc Networks with Unidirectional Links. Wireless Algorithm, Systems and Applicaitons}, pp. 103-112, 2007.
	
	\bibitem{onlinesetcover} N. Alon, B. Awerbuch, Y. Azar, N. Buchbinder, and J. Naor. {\em The Online Set Cover Problem. ACM Symposium on the Theory of Computation},  pp. 100-105, 2003.  
	
	\bibitem{onlineIndependentSet} I. Caragiannisa, A. V. Fishkinb, C. Kaklamanisa, E. Papaioannoua. {\em Randomized Online Algorithms and Lower Bounds for Computing Large Independent Sets in Disk Graphs. Symposium on Mathematical Foundations of Computer Science},	155 (2) pp. 119–136, 2007.
	
	\bibitem{conjecture} R. Tiwari and M. T. Thai. {\em On Enhancing Fault Tolerance of Virtual Backbone in a Wireless Sensor Network with  Unidirectional Links Sensors. Theory, Algorithms, and Applications}, pp. 3-18, 2011.
	
	\bibitem{simulation} R. Tiwari, T. Mishra, Y. Li. {\em k-strongly Connected Dominating and Absorbing Set in Wireless Ad hoc Networks with Unidirectional Links. Wireless Algorithm, Systems and Applicaitons}, pp. 103-112, 2007.

\end{thebibliography}
\end{document}